\colorlet{soulcyan}{cyan!40}
\theoremstyle{definition}
\newtheorem{definition}{Definition}
\newtheorem{theorem}{Theorem}
\newtheorem{lemma}{Lemma}
\newtheorem{remark}{Remark}
\newtheorem{corollary}{Corollary}
\def\BibTeX{{\rm B\kern-.05em{\sc i\kern-.025em b}\kern-.08em
    T\kern-.1667em\lower.7ex\hbox{E}\kern-.125emX}}
\begin{document}

\title{RIS-assisted Cell-Free MIMO with Dynamic Arrivals and Departures of Users: A Novel Network Stability Approach}

\author{Charbel Bou Chaaaya$^\dag$, Mohamad Assaad$^\ddag$, and Tijani Chahed$^\mathsection$\\ $^\dag$ Centre for Wireless Communications, University of Oulu, Finland   \\$^\ddag$ Universit\'e Paris-Saclay, CNRS, CentraleSup\'elec, Laboratoire des Signaux et Syst\`emes, France\\
$^\mathsection$  Institut Polytechnique de Paris, Télécom SudParis, 19 Place Marguerite Perey, 91120 Palaiseau, France}
\maketitle

\begin{abstract}
Reconfigurable Intelligent Surfaces (RIS) have recently emerged as a hot research topic, being widely advocated as a candidate technology for next generation wireless communications. These surfaces passively alter the behavior of propagation environments enhancing the performance of wireless communication systems. In this paper, we study the use of RIS in cell-free multiple-input multiple-output (MIMO) setting where distributed service antennas, called Access Points (APs), simultaneously serve the users in the network. While most existing works focus on the physical layer improvements RIS carry, less attention has been paid to the impact of dynamic arrivals and departures of the users on the system performance. In such a case, ensuring the stability of the network is the main goal. For that, we propose an optimization framework of the phase shifts, for which we derive a low-complexity solution. We rigorously prove that our low  complexity solution stabilizes a guaranteed fraction (higher than 78.5\%) of the stability region, and all this stability region can be achieved by the optimal solution of our framework. We also provide  numerical results that corroborate the theoretical claims.
\end{abstract}



\section{INTRODUCTION}
\IEEEPARstart{U}{nprecedented} communication requirements for Beyond 5G networks will call for essentially new schemes. Numerous technologies have been investigated in the last few decades to satisfy the exponential surge for wireless connectivity, notably massive Multiple-Input Multiple-Output (mMIMO)\cite{marzetta2010noncooperative,hajri2016scheduling}, ultra dense networks (UDN) and millimeter waves (mmWave) \cite{boccardi2014five}. The reliable service offered by these innovations however, comes at the cost of expensive hardware and high energy consumption. Consequently, extensive research is ongoing to find novel designs for sustainable wireless networks with spectrum and power efficiencies, and low hardware cost.
Reconfigurable intelligent surfaces (RIS), or intelligent reflective surfaces (IRS), have recently emerged as a paradigm that can leverage engineered scattering surfaces to transmit and receive information \cite{liaskos2018new}. By smartly tuning the phase shift of each element, the reflected signals from different paths can be coherently combined at the desired receiver to improve the quality of the received signal. Accordingly, the RIS can maneuver the propagation environment to boost coverage over the area of interest and avoid blockages.  

\par Although incorporating RIS in wireless systems entails challenging optimization problems, its usage has shown real potential in enhancing point-to-point communications, as well as uplink and downlink multi-user schemes \cite{wu2019intelligent}. Within this framework, recent works on RIS-aided wireless systems have predominantly studied the physical layer advantages, while the protocol aspects for their integration have been generally overlooked. In addition, less attention has been paid to control signaling and network stability, defined as the set of users arrival rates that the network can serve within finite time, which are imperative for their operation. For instance, the authors in \cite{croisfelt2022random} conceived a simple random access algorithm for RIS-assisted communications, and \cite{cao2022massive} presented a next generation multiple access (NGMA) scheme with RIS. Moreover, \cite{yang2021reconfigurable} tackled the problem of non-orthogonal multiple access (NOMA) with RIS, and \cite{yang2020energy} proposed a resource allocation scheme with rate splitting multiple access (RSMA). On the other hand, \cite{9665300} studied a RIS-aided cell-free MIMO system where distributed access points (APs) in the coverage area cooperate via a backhaul unit to serve the users. This architecture alleviates the intercell interference that characterizes mMIMO \cite{denis2021improving}, and will be used as the physical layer in this work. Furthermore, \cite{cellfreeris2021} considerd a RIS-assisted cell free massive MIMO scheme with random phase shifts at the RIS and conjugate beamforming at the massive base station (BS) antennas. A closed-form expression of the achievable rate is obtained in \cite{cellfreeris2021} and a gain in terms of system coverage and user rate is shown. The aforementioned works focused on the physical layer without considering the impact of dynamic traffic on system performance. 

Contrary  to the previous studies that consider a fixed set of users in the system, we focus in this paper on a more realistic scenario where users dynamically arrive, exchange bursts of information and then leave the network once they are served. The design of phase shifts in this case must take into account the traffic/flow level, where flows represent file transfer.   

\par In the context of dynamic arrivals and departures of users, the network stability is usually a main metric to consider in the analysis and algorithmic design of the network  \cite{assaad2018power}. Network stability implies that all files/users will be served in a finite time. Network stability with dynamic population has been considered in \cite{assaad2018power} to deal with power control in massive MIMO. In this paper, we consider a different problem of phase shifts in RIS-assisted cell-free MIMO. A main question is this case is to know what is the optimization problem that the system has to solve (to find the phase-shifts at each time) in order to ensure that the network stays stable. While existing works focused on maximizing throughput and/or other related functions, we propose a phase shift framework  that depends on the current flows intensity and 
Effective Signal to Interference and Noise Ratio ($\bar{ESINR}$) of the users (as we will explain in Section III in the paper). Interestingly, the proposed framework can stabilize the network whenever it is possible (i.e. allows  achieving max stability region). We provide a low complexity solution of this optimization problem and prove that it achieves a high fraction of the max stability region (higher than 78.5\%). To the best of our knowledge, this is the first work that  considers a RIS-assisted wireless environment with dynamic arrivals and departures of the users.




\section{SYSTEM MODEL} \label{sec:2}
\subsection{CHANNEL MODEL}

\par We consider the uplink of a RIS-aided cell-free MIMO system, as illustrated in \autoref{fig:system}, where $N$ access points (APs) equipped with a single antenna each, serve $K$ single antenna users who transmit their signals simultaneously on the same time-frequency resource. All APs are linked to a central processing unit (CPU) via a backhaul network. The communication is assisted by an RIS comprising of $M$ reflecting elements. Typically, the number of RIS engineered elements $M$ is very large. We focus on the scenario where the direct channels between the users and the APs are weak as compared to the ones reflected by the RIS.  This is typically the case when  the RIS is installed for example in an indoor environment (such as a mall), in a university campus, or in an outdoor area with many buildings and obstacles. The RIS can then modify  the phases of incident signals and the phase shifts are adjusted adaptively by a controller linked to the CPU.
We consider a channel model similar to the one adopted in \cite{bjornson2020rayleigh,9665300}. We assume a block fading model where the channel realizations are generated randomly and are independent between blocks. The channels between user $k=1,\dots,K$ and the RIS, and the RIS and AP $n=1,\dots,N$ are respectively denoted by $\bm{g}_k \in \C^{M \times 1}$ and $\bm{h}_n \in \C^{M \times 1}$, and are modeled as spatially correlated ergodic processes  $\bm{g}_k \sim \mathcal{CN}\left(\bm{0}_{M \times 1},\, \overline{\bm{R}}_k\right)$ and $\bm{h}_n \sim \mathcal{CN}\left(\bm{0}_{M \times 1},\, \widetilde{\bm{R}}_n\right)$, where $\overline{\bm{R}}_k, \widetilde{\bm{R}}_n \in \C^{M \times M}$ are deterministic positive semi-definite covariance matrices that describe the spatial correlation between the channels of the RIS elements. Further, we can write $\overline{\bm{R}}_k = \overline{\alpha}_k \bm{R}_\text{\fontfamily{cmr}\selectfont r}$ and $\widetilde{\bm{R}}_n = \widetilde{\alpha}_n \bm{R}_\text{\fontfamily{cmr}\selectfont t}$, with $\overline{\alpha}_k$, $\widetilde{\alpha}_n \in \C$ being the large-scale fading coefficients, and $\bm{R}_\text{\fontfamily{cmr}\selectfont r}$, $\bm{R}_\text{\fontfamily{cmr}\selectfont t}$ being the RIS receive and transmit correlation matrices respectively. These matrices can follow a general model that depends on the size, distance and layout of RIS scattering elements, and the propagation environment. For example, a popular model for such planar arrays is the Kroenecker model where the correlation matrix is Hermitian Toeplitz with exponential entries $\rho^{\abs{j-i}}$ at row $i$ and column $j$, if $j \geq i$, where $\rho \in \C$ is the correlation coefficient that satisfies $\abs{\rho} \leq 1$. For the case of isotropic scattering in front of the RIS, the correlation matrix can be explicitly obtained from \cite[Proposition 1]{bjornson2020rayleigh}. We assume for the rest of this paper, that the transmit and receive correlation matrices are not equal, and the only technical assumption is that their diagonal elements are equal to unity due to power normalization. 

\par The cascaded channel between user $k$ and AP $n$ is then written:
\begin{equation}
    u_{n,k} = \bm{h}_n^{H} \, \bm{\Theta} \, \bm{g}_k
\end{equation}
where $\bm{\Theta} = diag(\bm{\phi})$ is the RIS reflection matrix, with $\bm{\phi} = \left[\phi_1, \dots, \phi_M\right]^\intercal$, $\phi_m = e^{j\theta_m}$ and $\bm{\theta} = \left[\theta_1, \dots, \theta_M\right]^\intercal \in \left[0, 2\pi\right]^M$ are phase shifts of the $M$ elements. We define the aggregated channel from user $k$ to the APs as $\bm{u}_k = [u_{1,k}, \dots, u_{N,k}]^\intercal$, where $u_{n,k} = \sum_{m=1}^M h_{nm}\, g_{km}\, e^{j\theta_m}$ is the virtual link between the $k^\text{th}$ user and the $n^\text{th}$ AP.

\begin{figure}
    \centering
    \includegraphics[width=.7\columnwidth]{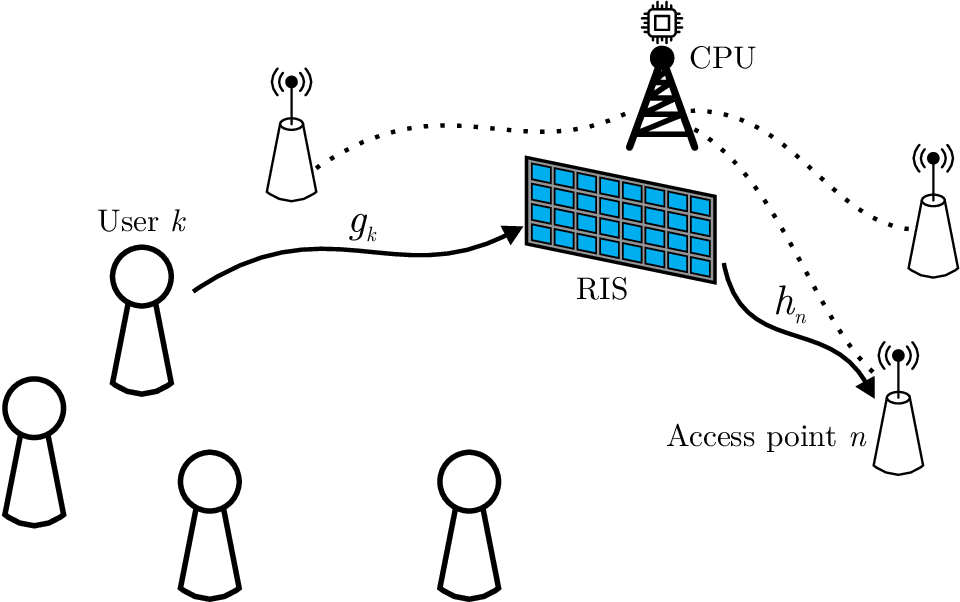}
    \caption{System Model}
    \label{fig:system}
\end{figure}

\par Similar models have been studied for massive MIMO systems in \cite{van2022reconfigurable}, and for cell-free MIMO in \cite{9665300}. For instance, the second moment of the aggregated channel can be directly obtained as:
\begin{align*}
\begin{split}
   & \E{\abs{u_{n,k}}^2} = \E{\abs{\bm{h}^\herm_n \, \bm{\Theta} \, \bm{g}_k}^2} = \E{\tr\left(\bm{h}_n^\herm \, \bm{\Theta} \, \bm{g}_k \, \bm{g}^\herm_k \, \bm{\Theta}^\herm
    \, \bm{h}_n\right)} \\
    &= \tr\left(\bm{\Theta}^\herm \, \E{\bm{h}_n \, \bm{h}_n^\herm} \, \bm{\Theta} \, \E{\bm{g}_k 
    \, \bm{g}^\herm_k}\right) = \tr\left(\bm{\Theta}^\herm \, \widetilde{\bm{R}}_n \, \bm{\Theta} \, \overline{\bm{R}}_k\right).
\end{split}
\end{align*}
We can also see that the channels between two different users and two different APs are mutually independent, $\forall n \neq n^\prime, k \neq k^\prime, \E{u_{n,k} \, u_{n^\prime,k^\prime}^*} = \E{u_{n,k}} \,\E{ u_{n^\prime,k^\prime}^*} = 0$. Moreover, the channels between a user and two different APs are uncorrelated, $\forall n \neq n^\prime, \E{u_{n,k} \, u_{n^\prime,k}^*} = \E{\left(\bm{h}_n^\herm \bm{\Theta} \bm{g}_k\right)\left(\bm{h}_{n^\prime}^\herm \bm{\Theta} \bm{g}_k\right)^*} = 0$, and the channels between two different users and the same AP are also uncorrelated, $\forall k \neq k^\prime, \E{u_{n,k} \, u_{n,k^\prime}^*} = \E{\left(\bm{h}_n^\herm \bm{\Theta} \bm{g}_k\right)\left(\bm{h}^\herm_n \bm{\Theta} \bm{g}_{k^\prime}\right)^*} = 0$. Finally, we define the aggregated channel covariance matrix $\bm{Q}_k = \E{\bm{u}_k \, \bm{u}_k^\herm} = \overline{\alpha}_k \tr\left(\bm{\Theta}^\herm \, \bm{R}_\text{\fontfamily{cmr}\selectfont t} \, \bm{\Theta} \, \bm{R}_\text{\fontfamily{cmr}\selectfont r}\right) \diag\left(\widetilde{\alpha}_1, \dots,\widetilde{\alpha}_N\right)$.

\subsection{UPLINK TRANSMISSION}

\par In a simultaneous manner, the $K$ users transmit their signals to the $N$ APs, that know an estimate $\widehat{u}_{n,k}$ of their channels, computed during the training phase. The baseband received signal at AP $n$ is:
\begin{equation}
    y_n = \sum_{k=1}^K u_{n,k} \, s_k + n_n
\end{equation}
where the additive noise at the $n^\text{th}$ AP $n_n \sim \mathcal{CN}\left(0,\, \sigma^2 \right)$ and the uplink transmitted symbol $s_k \in \C$ satisfies $\E{\abs{s_k}^2} = P_k$, with $P_k > 0$ being the transmit power budget of user $k$.
\par For data detection of the $k^\text{th}$ symbol, the $n^\text{th}$ AP multiplies the received signal $y_n$ with the conjugate of its locally obtained channel estimate. The acquired terms are then sent to the CPU via the backhaul network. Therefore, the decision statistic for user $k$'s symbol $s_k$ reads:
\begin{equation}\label{eq:4}
    r_k = \sum_{n=1}^N \widehat{u}^*_{n,k} \, y_n = \sum_{n=1}^N \sum_{j=1}^K \widehat{u}^*_{n,k}
    \, u_{n,j} s_j + \sum_{n=1}^N \widehat{u}^*_{n,k} \, n_n.
\end{equation}
In the following, to maintain analytical tractability, we assume that all users employ mutually orthogonal reverse link pilot sequences. Accordingly, channel estimation error can be neglected when the number of RIS elements $M$ is large, as shown in \cite{wang2021massive}. Hence, $\widehat{u}_{n,k} = u_{n,k}$.
Given the instantaneous signal to interference and noise ratio of user $k$ $\SINR_k$, his ergodic uplink rate is then $\E{\log\left(1+\SINR_k\right)}$. Due to the mathematical complexity of obtaining such closed-form expressions, we resort to a well-known tight lower bound achievable expression of the rate as follows. By utilizing the use-and-then-forget bounding technique as in  \cite[Theorem 4.4]{bjornson2017massive} and the received signal expression in \eqref{eq:4}, we can show that an achievable uplink rate expression for user $k$ is $\log\left(1+\ESINR_k\right)$, where $\ESINR$ is the effective uplink $SINR$ (which is of course different from average $SINR$), that can be computed by dividing the average  useful signal and average interference as follows (this rate expression $\log\left(1+\ESINR_k\right)$ is an extremely tight lower bound  as shown by empirical simulations in \mbox{\cite{9665300}}):
\begin{equation*}
    \ESINR_k = \frac{\abs{\mathtt{DS}_k}^2 P_k}{\sum\limits_{\substack{j = 1 \\ j \neq k}}^K \E{\abs{\mathtt{UI}_{j,k}}^2} P_j + \E{\abs{\mathtt{BU}_k}^2} P_k + \E{\abs{\mathtt{NO}_k}^2}}
\end{equation*}
where we have
\begin{equation*}\label{eq:6}
    \abs{\mathtt{DS}_k}^2 = \abs{\E{\sum\limits_{n=1}^N \widehat{u}^*_{n,k} u_{n,k}}}^2
    = \abs{\E{\bm{u}_k^\herm \bm{u}_k}}^2 = \left(\tr\left(\bm{Q}_k\right)\right)^2
\end{equation*}
\begin{align*}
\begin{split}
    \E{\abs{\mathtt{UI}_{j,k}}^2} &= \E{\abs{\sum\limits_{n=1}^N \widehat{u}^*_{n,k} \, u_{n,j}}^2}
    = \E{\abs{\bm{u}_k^\herm \bm{u}_j}^2} = \tr\left(\bm{Q}_k \, \bm{Q}_j\right)
\end{split}
\end{align*}
\begin{equation*}
    \E{\abs{\mathtt{NO}_k}^2} = \E{\abs{\sum\limits_{n=1}^N \widehat{u}^*_{n,k} \, n_n}^2}
    = \E{\abs{\bm{u}_k^\herm \bm{n}}^2} = \sigma^2 \tr\left(\bm{Q}_k\right)
\end{equation*}
\begin{align*}
\begin{split}
   & \E{\abs{\mathtt{BU}_k}^2} = \E{\abs{\sum\limits_{n=1}^N \widehat{u}^*_{n,k} \, u_{n,k} - 
    \E{\sum\limits_{n=1}^N \widehat{u}^*_{n,k} \, u_{n,k}}}^2} 
    \\ & = \E{\abs{\bm{u}_k^\herm\bm{u}_k}^2}-\abs{\E{\bm{u}_k^\herm\bm{u}_k}}^2
    = \mathbb{V}\left[\bm{u}_k^\herm \bm{u}_k\right] = \tr\left(\bm{Q}_k^2\right)
\end{split}
\end{align*}
where the final equality follows from \mbox{\cite[Lemma B.14]{bjornson2017massive}}  and \mbox{\eqref{eq:6}}, because the aggregated channels $\bm{u}_k$ can be approximated by Gaussian distributions when $M$ is large as shown in \cite{wang2021massive}. These variables represent the strength of the desired signal $\left(\mathtt{DS}_k\right)$, the beamforming gain uncertainty $\left(\mathtt{BU}_k\right)$, the interference caused by user $j$ on user $k$ $\left(\mathtt{UI}_{j,k}\right)$ and the power of additive noise $\left(\mathtt{NO}_k\right)$. After some algebraic manipulations, we simplify the $\ESINR$, denoted by $\hat{\gamma}$, expression to:
\begin{align} \label{eq:SINR}
    &\hat{\gamma}_k =  \frac{\overline{\alpha}_k^2 \left(\sum_{n=1}^N \widetilde{\alpha}_n\right)^2 \eta \, P_k}
    {\sum_{j=1}^K \overline{\alpha}_k \, \overline{\alpha}_j \left(\sum_{n=1}^N \widetilde{\alpha}_n^2\right) \eta \, P_j + \sigma^2 \, \overline{\alpha}_k \left(\sum_{n=1}^N \widetilde{\alpha}_n\right)}
\end{align}
where $\eta = \tr\left(\bm{\Theta}^\herm \, \bm{R}_\text{\fontfamily{cmr}\selectfont t} \, \bm{\Theta} \, \bm{R}_\text{\fontfamily{cmr}\selectfont r}\right)$. 


\subsection{USERS / FLOWS ARRIVAL MODEL}
\par All previous works on RIS-aided systems make the assumption that the users are static in the network, in the sense that a fixed number of devices constantly communicate with the BS or the APs. Nevertheless,  in real settings, users join the network dynamically, exchange bursts of data with the APs, and leave once they are served. Subsequently, we adopt the model used in \cite{assaad2018power} for power control in massive MIMO networks, and consider this dynamic user population around the RIS.
Specifically, we consider a finite, but possibly large, number of locations in the network in the vicinity of the RIS. Accordingly, we use $K$ to denote the number of locations instead of users. Let $X_k(t)$ represent the number of users at location $k$ at time $t$. Each user has one flow to be served, where a flow depicts a file that the user would like to convey to the APs. Thus, the words `flow' and `user' indicate a certain file transfer, and will be used interchangeably for the rest of this paper. Note that a similar model can be used for the downlink. Also, one can notice that a model where each user has multiple flows is a simple extension to this case. The arrival of the flows at location $k$ is modeled as a Poisson process with rate $\lambda_k$. These rates represent the average number of users arriving to each location. The size of the files to be transmitted is an exponentially distributed random variable $S_k$ at location $k$ with mean $\E{S_k}$. 
We consider distinct time-scales for the physical layer and the flow level, and we use the time index $t$ to refer to the time at the level of flows. In practice, multiple physical layer timeslots occur between times $t$ and $t+1$. In the rest of this paper, we will suppress the use of the time index when ambiguity is unlikely. Next, we define the network stability to be used in the stability analysis.
\begin{definition}[\textit{Strong Stability} \cite{neely2010stability}]
The network is said to be stable if $\lim\limits_{T \to \infty} \sup \,\frac{1}{T} \sum_{T=0}^{T-1} \sum_{k=1}^K \E{X_k(t)} < \infty$
\end{definition}

\par In simple terms, based on this stability definition, if the network is stable then all arriving users are expected to transmit their files in finite time. We let $\bm{\lambda} = \left[\lambda_1, \dots, \lambda_K\right]^\intercal$ denote the vector of user arrival rates. We can now present the definition of a stability region.
\begin{definition}
The stability region is the set of all mean arrival rate vectors for which there exists a RIS configuration that makes the network stable.
\end{definition}

\par We are now interested in finding a RIS configuration that stabilizes the network whenever possible.

\section{RIS PHASE SHIFTS DESIGN} \label{sec:3}
First, we formulate the problem we are interested in solving. Let $\bm{\Lambda}^\text{max}$ be the stability region under RIS configurations. Our target is then:
\begin{equation*}
    \begin{aligned}
        &\textrm{Choose} \quad && \bm{\theta} \in \left[0, 2\pi\right]^M\\
        &\textrm{such that} \quad && \forall \bm{\lambda} \in \bm{\Lambda}^\textrm{max},\, \textrm{the network is stable.}
    \end{aligned}
\end{equation*}

\par Note that the flows' arrival rates $\lambda_k$ might not be known a priori at the CPU; and hence, the network must be stabilized only by modifying the RIS phase shifts for any arrival rates lying inside the stability region. To accomplish this, we consider the following optimization problem:
\begin{equation*}
    \label{op:P1}
    \begin{aligned}
        &\textrm{(P1)}& \quad &\underset{\bm{\theta}}{\textrm{maximize}} \quad && 
        f\left(\bm{\theta}\right) = \sum_{k=1}^K X_k \, U_k\left(\bm{\theta}\right)\\
        &&&\textrm{subject to}&& \quad 0 \leq \theta_i \leq 2\pi \quad i=1, \dots, M\\
    \end{aligned}
\end{equation*}
where we choose the utility functions: $U_k\left(\bm{\theta}\right) = \log\left(ESINR_k\right)$\footnote{Throughout this paper, we use $\log\left(\cdot\right)$ as the natural logarithm function for simplicity, even in the bit rate expression. The analysis remains true for any other base, since all logarithms are equal up to a scaling factor.}. This choice is motivated by the concavity of the equivalent utility functions with respect to the allocated rates \cite{assaad2018power}. This asset will be discussed in the next section.
We will show in the next section that to stabilize the network for dynamic arrivals and departures of users, it is sufficient to solve the aforementioned problem \hyperref[op:P1]{(P1)}. In the remaining of this section, we will first show how to solve problem \hyperref[op:P1]{(P1)} and the stability analysis will be  provided in the next section.
\begin{remark}
We strongly emphasize that we do not make any kind of approximation in \hyperref[op:P1]{(P1)}, such as $\log\left(1 + \ESINR\right) \approx \log\left(\ESINR\right)$ for high $\ESINR$ values. The objective function can be seen as a sum of each location's utility functions, selected as $\log\left(\ESINR_k\right)$, weighted by the traffic volume at each location $X_k$. Regardless, the rate is still equal to $\log\left(1 + \ESINR\right)$. Informally, \hyperref[op:P1]{(P1)} can be seen as a variation to the proportional fairness problem but at the $\ESINR$ level.
\end{remark}

\par Solving this optimization problem is challenging, because of the intricate form of its objective function. To mitigate this difficulty, we start by showing that the objective function $f\left(\bm{\theta}\right)$ is increasing in $\eta$:
\begin{align}
\begin{split}
    &\frac{\partial f\left(\bm{\theta}\right)}{\partial \, \eta} = \sum_{k=1}^K X_k \times \\
    &\frac{\sigma^2 \, \overline{\alpha}_k \left(\sum_{n=1}^N \widetilde{\alpha}_n \right)}
    {\eta \left(\sigma^2 \, \overline{\alpha}_k \left(\sum_{n=1}^N \widetilde{\alpha}_n \right)
    + \sum_{j=1}^K \overline{\alpha}_k \, \overline{\alpha}_j \left(\sum_{n=1}^N \widetilde{\alpha}_n^2\right) \eta \, P_j\right)} 
    > 0.
\end{split}
\end{align}
On the other hand, notice that\footnote{This is a direct application of \cite[Theorem 1.11]{zhang2017matrix}.} $\eta = \tr\left(\bm{\Theta}^\herm \, \bm{R}_\text{\fontfamily{cmr} \selectfont t} \, \bm{\Theta} \, \bm{R}_\text{\fontfamily{cmr}\selectfont r}\right) = \bm{\phi}^\herm \, \bm{R} \, \bm{\phi}$, where $\bm{R} = \bm{R}_\text{\fontfamily{cmr} \selectfont t} \odot \bm{R}^\intercal_\text{\fontfamily{cmr}\selectfont r}$ is positive semi-definite by virtue of the Schur product theorem \cite{Schur1911}. Thus, \hyperref[op:P1]{(P1)} is equivalent to:
\begin{equation*}
    \label{op:P2}
    \begin{aligned}
        &\textrm{(P2)}& \quad &\underset{\bm{\phi}}{\textrm{maximize}} \quad && 
        \bm{\phi}^{\herm} \, \bm{R} \, \bm{\phi}\\
        &&&\textrm{subject to} \quad && \abs{\phi_i} = 1 && \quad i=1, \dots, M
    \end{aligned}
\end{equation*}
\hyperref[op:P2]{(P2)} is a complex quadratic optimization problem, that is in the class of NP-hard problems; and therefore a globally optimal solution cannot be obtained in polynomial time. However, we know in this case, that applying the semi-definite relaxation (SDR) technique reported in \cite{so2007approximating} can bound the error committed while approximating the solution. As such, the following semi-definite program (SDP) provides a relaxation for \hyperref[op:P2]{(P2)}:
\begin{equation*}
    \begin{aligned}
    \label{op:P3}
        &\textrm{(P3)}& \quad &\underset{\bm{\Phi}}{\textrm{maximize}} \quad && 
        \tr\left( \bm{R} \, \bm{\Phi}\right) \\
        &&&\textrm{subject to} \quad && \bm{\Phi} \succeq \bm{0},\; \bm{\Phi}_{i,i} = 1 && \quad i=1, \dots, M
    \end{aligned}
\end{equation*}
where $\bm{\Phi} = \bm{\phi}\, \bm{\phi}^\herm$. \hyperref[op:P3]{(P3)} is standard convex optimization problem that can be optimally solved by invoking interior-point based solvers, such as CVX \cite{grant2014cvx}. If the obtained solution $\bm{\Phi}^\star$ is rank-one, then it is also the optimal solution to \hyperref[op:P2]{(P2)}. But since this is not generally the case, we construct a rank-one solution to \hyperref[op:P2]{(P2)} from $\bm{\Phi}^\star$ using Guassian randomization as follows. We first compute the eigenvalue decomposition $\bm{\Phi}^\star = \bm{U}\,\bm{\Sigma}\,\bm{U}^\herm$, where  $\bm{U}$ and $\bm{\Sigma}$ are a unitary and a diagonal matrix respectively. Then we obtain a sub-optimal solution to \hyperref[op:P2]{(P2)} as $\overline{\bm{\phi}} = \bm{U} \, \bm{\Sigma}^{1/2} \, \bm{r}$ where $\bm{r} \in \C^{M \times 1}$ is a random vector generated according to $\bm{r} \sim \mathcal{CN}\left(\bm{0}_{M \times 1},\, \bm{I}_M\right)$. By drawing independent random vectors $\bm{r}$, the objective function of \hyperref[op:P2]{(P2)} is approximated by the maximum one attained by the corresponding $\overline{\bm{\phi}}$. We finally recover a suboptimal solution to \hyperref[op:P2]{(P2)} as a vector $\bm{\phi}$ with elements $\phi_i = e^{j \arg\left(\overline{\phi}_i\right)}, 1 \leq i \leq M$. It has been shown in \cite{so2007approximating},~\cite{luo2010semidefinite} that such techniques with a sufficiently large number of randomizations, guarantee a $\frac{\pi}{4}$-approximation to \hyperref[op:P2]{(P2)}, in following sense:
\begin{equation}\label{eq:imp-aprox}
    \frac{\pi}{4} \, \eta^{\text{opt}} \leq \eta^{\text{sub}} \leq \eta^{\text{opt}}
    \quad \text{or} \quad \frac{\pi}{4} \leq \gamma = \frac{\eta^{\text{sub}}}{\eta^{\text{opt}}} \leq 1
\end{equation}
where $\eta^{\text{opt}}$ and $\eta^{\text{sub}}$ are the optimal and approximate solutions of \hyperref[op:P2]{(P2)}, and $\gamma$ is the approximation accuracy.

\par It is worth noting that a similar method can be applied to solve the same problem in the case the RIS phase shifts can only take a finite number of discrete values. In such scenarios, $\bm{\phi} \in \mathcal{V}_L = \{1, \omega, \dots, \omega^{L-1}\}$, where $\omega$ is the principal $L^\text{th}$ root of unity and $L$ is the number of phase shift levels. The minimum approximation accuracy becomes $\frac{\left(L \sin\left(\frac{\pi}{L}\right)\right)^2}{4 \pi}$.

\section{STABILITY ANALTSIS OF THE NETWORK} \label{sec:4}
\par In this part, we will show that the RIS configuration presented in the previous section, i.e. the one that solves \hyperref[op:P1]{(P1)}, guarantees the network stability for any flow arrival rates inside the stability region $\bm{\Lambda}^\text{max}$ when \hyperref[op:P1]{(P1)} is solved with no optimality gap, and for all flow arrival rates that are within a factor proportional to the optimality gap otherwise. To do so, we utilize fluid limit analysis to demonstrate the stability. In other terms, for the stochastic process $\mathbf{X}(t) = \left[X_1(t), \dots, X_K(t)\right]^\intercal$ that portrays the flows' volumes, we introduce a deterministic process $\mathbf{Y}(t)$ that approximates the evolution of $\mathbf{X}$ subject to a certain limiting criteria. It is known that if the fluid limit associated to a stochastic process reaches zero in finite time, then the process itself is stable. Furthermore, to show that $\mathbf{Y}$ is stable, it is sufficient to prove that its corresponding Lyapunov function exhibits negative drift under the selected RIS configuration \cite{bonald2001impact}, \cite{andrews2004scheduling}.

\par Recall that the users' arrivals follow a Poisson distribution with mean $\lambda_k$ at location $k$. We suppose that when a new user joins the network, he directly starts a connection with the APs and transmits a file having an average size of $\E{S_k}=1 \, \forall \, k$. It will then be straightforward to extend this situation to different file size means and renewal arrival processes. In this context, it is clear that $\mathbf{X}$ is a Markov process that has the following evolution at each timeslot $t$, and at each location $k$:
\begin{align*}
    &&X_k \longrightarrow X_k + 1 &&\text{at rate} &&\lambda_k \\
    &&X_k \longrightarrow X_k - 1 &&\text{at rate} &&R_k
\end{align*}
where $R_k$ is the physical layer rate allocated to location $k$ between time instants $t$ and $t+1$. This rate is expressed as $R_k = \log\left(1 + \ESINR_k\right)$, where the $\ESINR$ at each location is given by \eqref{eq:SINR}.
\par After introducing all the necessary ingredients, we now state the main result of this paper.
\begin{theorem}
By choosing the RIS phase shifts that solve \hyperref[op:P1]{(P1)}, the network is stable for any user arrival rate vector $\gamma \bm{\lambda}$, where $\bm{\lambda} \in \bm{\Lambda}^\text{max}$ and $\gamma \in \left[\frac{\pi}{4}, 1\right]$ is the approximation accuracy of \hyperref[op:P2]{(P2)}.
\end{theorem}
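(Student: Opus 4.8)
The plan is to use the standard fluid-limit / Lyapunov-drift machinery for flow-level models, and to show that the RIS configuration obtained from the SDR procedure achieves a per-location rate that is at least the rate achievable under the optimal RIS configuration evaluated at the scaled arrival vector $\gamma\bm{\lambda}$. First I would set up the fluid limit $\mathbf{Y}(t)$ of the Markov process $\mathbf{X}(t)$: by the functional strong law of large numbers for the Poisson arrivals and the departure process with rates $R_k = \log(1+\SINR_k)$, any fluid limit point satisfies the ODE $\dot{Y}_k = \lambda_k - R_k(\bm{\theta}^{\text{sub}})\,\mathbf{1}\{Y_k > 0\}$ at regular points (more precisely, in the $Y_k = 0$ boundary case one uses the projected dynamics, but the drift argument only needs the interior). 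Then I would take as Lyapunov function $V(\mathbf{Y}) = \sum_{k=1}^K Y_k$ (or a weighted $\ell_1$ norm), and it suffices, by the criterion of \cite{bonald2001impact,andrews2004scheduling}, to exhibit a strictly negative drift $\dot{V} \leq -\varepsilon$ whenever $V(\mathbf{Y}) > 0$.

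The key analytical step is to relate $\SINR_k$ under the suboptimal $\bm{\theta}^{\text{sub}}$ to $\SINR_k$ under the optimal $\bm{\theta}^{\text{opt}}$. From \eqref{eq:SINR}, $\SINR_k$ depends on $\bm{\theta}$ only through the scalar $\eta = \bm{\phi}^\herm \bm{R}\bm{\phi}$, and we showed $f$ (hence each $\SINR_k$) is increasing in $\eta$. Using the approximation guarantee \eqref{eq:imp-aprox}, $\eta^{\text{sub}} \geq \gamma\,\eta^{\text{opt}}$. The crucial claim I would then establish is a monotonicity/homogeneity fact: $\log\bigl(1+\SINR_k(\gamma\eta^{\text{opt}})\bigr)$ scales the effective service capacity in exactly the way that compensates for replacing $\bm{\lambda}$ by $\gamma\bm{\lambda}$. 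Concretely, one observes that if $\bm{\lambda} \in \bm{\Lambda}^{\text{max}}$, then under $\bm{\theta}^{\text{opt}}$ there is slack: the time-stationary work-conservation/throughput condition that characterizes $\bm{\Lambda}^{\text{max}}$ (an $\alpha$-fair / proportionally-fair allocation stability condition, as in \cite{assaad2018power}) holds with the rate vector induced by $\eta^{\text{opt}}$. Scaling arrivals by $\gamma < 1$ leaves a factor-$\gamma$ margin, and since $\eta^{\text{sub}} \geq \gamma\eta^{\text{opt}}$ together with concavity of $\log\SINR_k$ in the allocated rate (the property invoked after (P1)) implies the service rates under $\bm{\theta}^{\text{sub}}$ dominate what is needed to serve $\gamma\bm{\lambda}$, the drift of $V$ is negative. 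I would make this rigorous by plugging the bound $\eta^{\text{sub}} \geq \gamma\eta^{\text{opt}}$ into \eqref{eq:SINR}, bounding $R_k(\bm{\theta}^{\text{sub}}) \geq R_k^{(\gamma)}$ for an appropriate reference rate, and then quoting the fluid-stability result that the proportionally-fair-type allocation is stable for any load strictly inside its rate region.

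I expect the main obstacle to be the second step — precisely formalizing "whenever the network is stabilizable, solving (P1) stabilizes it," i.e.\ that the maximizer of the $X_k$-weighted sum of $\log\SINR_k$ yields the negative-drift allocation. This requires identifying the objective of (P1) as (the $\SINR$-level analogue of) the Lyapunov drift itself, so that maximizing $f(\bm{\theta})$ is exactly choosing the RIS state that minimizes $\dot{V}$ at the current fluid state $\mathbf{Y} = \mathbf{X}$; the concavity remark after (P1) is what guarantees this greedy choice is globally drift-optimal rather than merely locally. Handling the boundary faces $\{Y_k = 0\}$ of the fluid ODE and the scaling/tightness argument that produces the fluid limit from $\mathbf{X}$ are routine but must be stated; and the $\gamma$-factor must be threaded carefully through the interference terms in \eqref{eq:SINR}, where $\eta$ appears in both numerator and denominator, so the naive "$\SINR$ scales by $\gamma$" is false and one instead argues via monotonicity plus the strict-interior slack of $\gamma\bm{\lambda}$.
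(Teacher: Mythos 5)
Your overall skeleton matches the paper's: pass to the fluid limit, thread the SDR guarantee $\eta^{\text{sub}} \geq \gamma\,\eta^{\text{opt}}$ through the monotonicity of $\SINR_k$ in $\eta$, and close with a negative Lyapunov drift. The genuine gap is in the step you leave vague, which is precisely where the paper needs its one auxiliary lemma. From $\SINR_k\left(\eta^{\text{sub}}\right) \geq \SINR_k\left(\gamma\,\eta^{\text{opt}}\right) \geq \gamma\,\SINR_k\left(\eta^{\text{opt}}\right)$ you must still convert a multiplicative guarantee at the $\SINR$ level into one at the rate level, i.e.\ show $\log\left(1+\gamma\,\SINR_k\left(\eta^{\text{opt}}\right)\right) \geq \gamma \log\left(1+\SINR_k\left(\eta^{\text{opt}}\right)\right)$, so that $R_k^{\text{sub}} \geq \gamma R_k^{\text{opt}} \geq \gamma \lambda_k$. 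The property you invoke for this --- concavity of $\log\SINR_k$ (equivalently of $\log\left(e^{R_k}-1\right)$) in the allocated rate --- does not give it; that concavity is what the paper uses later, for the tangent-line bound inside the drift computation. What is needed is the elementary inequality $\log\left(1+\gamma x\right) \geq \gamma \log\left(1+x\right)$ for $\gamma \in (0,1]$, $x>0$ (Lemma~\ref{lemma1} in the paper, via Bernoulli's inequality). ``Monotonicity plus interior slack'' alone stops at the $\SINR$ level and never produces the factor $\gamma$ on the rates, which is exactly what the drift argument consumes.

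Conversely, the issue you single out as the main obstacle --- proving that the maximizer of the $X_k$-weighted sum of $\log\SINR_k$ is the drift-optimal allocation --- dissolves in this model: every $\SINR_k$ depends on $\bm{\theta}$ only through the single scalar $\eta = \bm{\phi}^\herm \bm{R}\, \bm{\phi}$ and is increasing in it, so \hyperref[op:P1]{(P1)} is equivalent to \hyperref[op:P2]{(P2)} irrespective of the weights $X_k$, and its (approximate) solution simultaneously (near-)maximizes every $R_k$; this same total ordering is what lets you turn $\bm{\lambda} \in \bm{\Lambda}^{\text{max}}$ into the componentwise bound $\lambda_k \leq R_k^{\text{opt}}$, a point you should state rather than defer to an external stability theorem for proportional-fair allocations. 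Once the componentwise dominance $R_k^{\text{sub}} \geq \gamma\lambda_k$ holds with an $\epsilon$ margin, your linear Lyapunov function $V(\mathbf{Y})=\sum_k Y_k$ does yield a uniformly negative drift and finite-time draining of the fluid limit --- a legitimately simpler closing step than the paper's, which instead combines the tangent-line bound on $\log\left(e^{R}-1\right)$ with the weighted quadratic $\mathcal{L}\left(\mathbf{Y}\right)=\sum_{k}\tfrac{1}{2}\tfrac{e^{\gamma\lambda_k+\epsilon}}{e^{\gamma\lambda_k+\epsilon}-1}Y_k^2$. So the route is essentially the paper's up to the choice of Lyapunov function; the missing $\log\left(1+\gamma x\right)\geq\gamma\log\left(1+x\right)$ step and the explicit use of the single-parameter structure are what must be added to make the proposal complete.
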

The aforementioned theorem implies that our proposed solution allows achieving a guaranteed fraction higher than $\pi/4$ (i.e. higher than 78.5\%) of the stability region. 
\begin{proof}

Before proving the theorem, we provide a useful lemma for the demonstration.
\begin{lemma} \label{lemma1}
$\forall\, \gamma \in (0, 1], \forall\, x > 0$, $\log \left( 1 + \gamma x \right) \geq \gamma \log \left( 1 + x \right)$, with equality when $\gamma = 1$.
\end{lemma}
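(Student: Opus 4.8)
The plan is to recognize the claimed inequality as an instance of Jensen's inequality for the strictly concave function $\log$. First I would fix $x > 0$ and note the algebraic identity $1 + \gamma x = (1-\gamma)\cdot 1 + \gamma \cdot (1+x)$, which exhibits $1+\gamma x$ as a convex combination of the two positive reals $1$ and $1+x$ with weights $1-\gamma$ and $\gamma$, both nonnegative because $\gamma \in (0,1]$. Concavity of $\log$ then yields $\log(1+\gamma x) \geq (1-\gamma)\log 1 + \gamma \log(1+x) = \gamma \log(1+x)$, which is exactly the assertion.

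For the equality case I would invoke strict concavity of $\log$: Jensen's inequality holds with equality only if the two sample points coincide or one of the weights vanishes. Since $x > 0$ we have $1 \neq 1+x$, so the only remaining possibility is a degenerate weight, i.e. $\gamma = 0$ or $\gamma = 1$; as $\gamma \in (0,1]$, equality occurs precisely at $\gamma = 1$, and conversely $\gamma = 1$ gives equality trivially.

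A self-contained alternative avoids citing Jensen: set $g(\gamma) = \log(1+\gamma x) - \gamma \log(1+x)$ on $[0,1]$, so that $g(0) = g(1) = 0$ and $g''(\gamma) = -x^2/(1+\gamma x)^2 < 0$. Strict concavity forces $g$ to lie strictly above the chord through its endpoints (the zero function) on $(0,1)$, giving $g(\gamma) > 0$ there and $g(1)=0$. Either route is a few lines; there is no genuine obstacle here. The only point requiring a little care is the equality statement — strictness of the inequality fails only at the single point $\gamma = 1$, so one should phrase the ``with equality when $\gamma = 1$'' clause as an iff and make sure $x>0$ is used to rule out the degenerate configuration of the sample points.
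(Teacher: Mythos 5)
Your proof is correct, but it takes a different route from the paper. The paper's argument is a one-liner by citation: for $\gamma\in(0,1)$ and $x>0$ it invokes Bernoulli's inequality $\left(1+x\right)^{\gamma} < 1+\gamma x$ and composes both sides with the increasing function $\log$, handling $\gamma=1$ separately. You instead argue directly from concavity of $\log$, writing $1+\gamma x = (1-\gamma)\cdot 1 + \gamma\,(1+x)$ and applying Jensen's inequality to this convex combination (with a second, self-contained variant via $g(\gamma)=\log(1+\gamma x)-\gamma\log(1+x)$, $g(0)=g(1)=0$, $g''<0$). The two arguments are essentially dual: exponentiating your inequality recovers exactly the Bernoulli bound the paper cites, so neither is deeper than the other. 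What your version buys is self-containment (no external handbook citation) and a clean characterization of the equality case via strict concavity, which sharpens the paper's ``with equality when $\gamma=1$'' to an if-and-only-if; what the paper's version buys is brevity, since Bernoulli's inequality is quoted off the shelf. Your handling of the equality case correctly uses $x>0$ to rule out coincident sample points, so there is no gap.
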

\begin{proof}
When $\gamma=1$ the equality is obvious. Otherwise, $\forall\, \gamma \in (0, 1), \forall\, x > 0$, the proof follows by composing both sides of Bernoulli's inequality \cite{bullen2013handbook} $\left(1+x\right)^\gamma < 1+\gamma x$ with the increasing function $x \mapsto \log(x)$.
\end{proof}
The proof consists of studying the fluid system obtained when the initial number of flows grows to infinity. Specifically, we consider the set of fluid limits defined by:
\begin{equation*}
    Y_k(t) = \lim_{\beta \to \infty} \frac{X_k\left(\beta t\right)}{\beta}
    \quad\text{with}\quad \sum_{k=1}^K X_k(0) = \beta.
\end{equation*}
Notice that if the limit exists, $\sum\limits_{k=1}^K Y_k(0) = 1$. Given this initial distribution of the fluid system $\mathbf{Y}(0)$, the evolution of $\mathbf{Y}(t)$ is uniquely defined by the following set of differential equations, given by the strong law of large numbers:
\begin{equation} \label{eq:differential}
    \dv{t} Y_k(t) = \lambda_k - R_k \quad\text{for all}\; k, t\;\text{such that}\, Y_k(t) > 0.
\end{equation}
Now, we define $\mathbf{R} = \left[R_1, \dots, R_K\right]^\intercal$ as a vector containing the allocated rates at each location. Recall that the rate is calculated by $R_k = \log\left(1+\ESINR_k\right)$. Conversely we can write $\log\left(\ESINR_k\right) = \log\left(e^{R_k} - 1\right)$. Then, \hyperref[op:P1]{(P1)} is equivalent to the following optimization problem:
\begin{equation*}
    \label{op:P4}
    \begin{aligned}
        &\textrm{(P4)}& \quad &\underset{\mathbf{R}}{\textrm{maximize}} \quad && 
        \sum_{k=1}^K Y_k \log\left(e^{R_k}-1\right)\\
        &&&\textrm{subject to} \quad && 0 \leq f_i\left(\mathbf{R}\right) \leq 2\pi, \quad i=1, \dots, M
    \end{aligned}
\end{equation*}
where $\mathbf{X}$ is interchanged with its limit $\mathbf{Y}$, and the constraint functions $f_i\left(\mathbf{R}\right)$ are the equivalent of the phase shift constraints on $\bm{\theta}$. Let $\ESINR_k\left(\eta\right)$ denote the $\ESINR$ value at location $k$ for a given value of $\eta$. Since $\ESINR_k\left(\eta\right)$ increases with $\eta$, we have from \eqref{eq:imp-aprox}:
\begin{align}
\begin{split}\label{ineq:14}
    \gamma\,\ESINR_k\left(\eta^\text{opt}\right) \overset{\text{(a)}}{\leq} \ESINR_k\left(\gamma\,\eta^\text{opt}\right) &\leq
    \ESINR_k\left(\eta^\text{sub}\right) \\ &\leq  \ESINR_k\left(\eta^\text{opt}\right)
\end{split}
\end{align}
where (a) is a straightforward bound due to $0 < \gamma \leq 1$. Moreover, for any $x>0$ the function $x \mapsto \log\left(1+x\right)$ is increasing and thus,
\begin{align}
\begin{split}\label{ineq:15}
    \gamma\,\log\left(1 + \ESINR_k\left(\eta^\text{opt}\right)\right) \overset{\text{(b)}}{\leq} \log\left(1 + \gamma\,\ESINR_k\left(\eta^\text{opt}\right)\right) \\ \leq
    \underbrace{\log\left(1 + \ESINR_k\left(\eta^\text{sub}\right)\right)}_{R_k^\text{sub}} \leq \underbrace{\log\left(1 + \ESINR_k\left(\eta^\text{opt}\right)\right)}_{R_k^\text{opt}}
\end{split}
\end{align}
where (b) is a direct application of Lemma \ref{lemma1}. We define $\mathbf{R}^\text{sub} = \left[R^\text{sub}_1, \dots, R^\text{sub}_K\right]^\intercal$ and $\mathbf{R}^\text{opt} = \left[R^\text{opt}_1, \dots, R^\text{opt}_K\right]^\intercal$. Due to the equivalence between \hyperref[op:P1]{(P1)} and \hyperref[op:P4]{(P4)}, $\mathbf{R}^\text{sub}$ and $\mathbf{R}^\text{opt}$ are, respectively, the approximate and optimal solutions to \hyperref[op:P4]{(P4)}. The series of inequalities in \eqref{ineq:14} and \eqref{ineq:15} mean that if $\gamma$ is the approximation accuracy of \hyperref[op:P2]{(P2)} in the sense of \eqref{eq:imp-aprox}, then $\gamma$ is also an approximation accuracy in the same sense for the $\ESINR$ and the rate at all locations. 

On the other hand, the objective function of \hyperref[op:P4]{(P4)} is the sum of strictly concave functions $\log\left(e^{R_k}-1\right)$ with respect to $R_k$. These concave functions are always upper bounded by the first order of their Taylor expansion. Particularly, we have:
\begin{equation} \label{eq:14}
    \log\left(e^{R^\text{sub}_k}-1\right) \leq
    \log\left(e^{\gamma \lambda_k}-1\right) + \frac{e^{\gamma \lambda_k}}{e^{\gamma \lambda_k}-1}\left(R^\text{sub}_k- \gamma \lambda_k\right)
\end{equation}
for any arrival rate $\lambda_k$. Furthermore, for $x>0$, the function $x \mapsto \log\left(e^x-1\right)$ is strictly increasing. By (\ref{ineq:15}), we have  $\sum_{k=1}^K Y_k \log\left(e^{\gamma R^\text{opt}_k}-1\right) \leq  \sum_{k=1}^K Y_k \log\left(e^{ R^\text{sub}_k}-1\right)$. 

On the other hand, one can show easily that  $\eta^\text{opt}$ is also the optimal solution to the optimization problem $\sum_{k=1}^K X_k log((1+ESINR)^\gamma-1)$, by proving that the objective function is monotone with respect to $\eta$ as done previously in this paper. Therefore, $R^\text{opt}$ is also optimal for $\sum_{k=1}^K Y_k \log\left(e^{\gamma R_k}-1\right) $. This implies that $\sum_{k=1}^K Y_k \log\left(e^{\gamma R^\text{opt}_k}-1\right) \geq  \sum_{k=1}^K Y_k \log\left(e^{ \gamma \lambda_k}-1\right)$. 

Consequently,  for any arrival rate vector $\bm{\lambda}$ inside the stability region, we can write:
\begin{equation}\label{eq:16}
    \sum_{k=1}^K Y_k \frac{e^{ \gamma \lambda_k}}{e^{ \gamma \lambda_k}-1}\left( \gamma \lambda_k - R^\text{sub}_k\right)
    \leq 0.
\end{equation}

We then introduce the quadratic Lyapunov function $\mathcal{L}\left(\mathbf{Y}(t)\right) = \sum_{k=1}^K \frac{1}{2} \frac{e^{\gamma \lambda_k + \epsilon}}{e^{\gamma \lambda_k + \epsilon}-1} Y^2_k(t)$, where for any arrival rate $\gamma \lambda_k$ lying strictly inside the stability region, we select $\epsilon>0$ such that $\forall k$, $\gamma \lambda_k + \epsilon$ is inside or on the boundary of the region. We get 
\begin{equation}\label{eq:17}
    \dv{t}\mathcal{L}\left(\mathbf{Y}\right) = 
    \sum_{k=1}^K Y_k \frac{e^{\gamma \lambda_k + \epsilon}}{e^{\gamma \lambda_k + \epsilon}-1}\left(\gamma \lambda_k+ \epsilon- R^\text{sub}_k\right) \leq 0
\end{equation}
In view of \eqref{eq:17}, this reads:
\begin{equation}
    \dv{t}\mathcal{L}\left(\mathbf{Y}\right) \leq - \epsilon
    \sum_{k=1}^K Y_k \frac{e^{\gamma \lambda_k + \epsilon}}{e^{\gamma \lambda_k + \epsilon}-1}.
\end{equation}
By using  the inequality $$2 \times \mathcal{L}\left(\mathbf{Y}\right)= \sum_{k=1}^K  \frac{e^{\gamma \lambda_k + \epsilon}}{e^{\gamma \lambda_k + \epsilon}-1} Y^2_k(t) \leq  \left( \sum_{k=1}^K  \frac{e^{\gamma \lambda_k + \epsilon}}{e^{\gamma \lambda_k + \epsilon}-1} Y_k(t)\right)^2,$$
we conclude that there exists a constant $\xi>0$ such that:
\begin{equation}\label{eq:19}
    \dv{t}\mathcal{L}\left(\mathbf{Y}\right) \leq - \xi \sqrt{\mathcal{L}\left(\mathbf{Y}\right)}
\end{equation}
This implies for all $t \geq 0$ such that $\mathcal{L}\left(\mathbf{Y}(t)\right) \geq 0$:
\begin{equation}\label{eq:20}
    \mathcal{L}\left(\mathbf{Y}(t)\right) \leq 
    \left(\sqrt{\mathcal{L}\left(\mathbf{Y}(0)\right)} - \frac{\xi}{2} \, t\right)^2
\end{equation}
and also $\sqrt{\mathcal{L}\left(\mathbf{Y}(t)\right)} \leq 
    \left(\sqrt{\mathcal{L}\left(\mathbf{Y}(0)\right)} - \frac{\xi}{2} \, t\right)$ since $\mathcal{L}\left(\mathbf{Y}(t)\right) \geq 0$. Therefore, the inequality in \eqref{eq:19} means that if there exists some $T>0$ with $\mathcal{L}\left(\mathbf{Y}(T)\right)=0$, then $\mathcal{L}\left(\mathbf{Y}(t)\right)=0$ for all $t \geq T$. Looking at \eqref{eq:20}, we select:
\begin{equation}
    T = \frac{2}{\xi} \sqrt{\sum_{k=1}^K \frac{1}{2} \frac{e^{\gamma \lambda_k + \epsilon}}{e^{\gamma \lambda_k + \epsilon}-1}}
\end{equation}
This implies that for all $t \geq T$, $\mathcal{L}\left(\mathbf{Y}(t)\right)$ and thus $\mathbf{Y}(t)$ are identically zero, and the system is stable.
\end{proof}

\par We provide a summary of the proof. First, we represent the flow model by a vector $\mathbf{X}(t)$ containing the number of users at each location. Given the arrival model, we obtain the temporal equation describing the evolution of $\mathbf{X}$. Then, using the law of large numbers, a similar limit equation can be obtained for its fluid limit $\mathbf{Y}$. Knowing that the allocated rate at each location should be strictly larger than the arrival rate in order to guarantee the stability, we bound the suboptimal rates obtained from \hyperref[op:P1]{(P1)} and show that they are optimal in the scaled stability region. Finally, we use the Lyapunov technique to show that $\mathbf{Y}$ reaches zero in finite time for these arrival rates. It is then a direct conclusion that the network is stable.

\begin{figure}
    \centering
    \includegraphics[width=0.9\columnwidth]{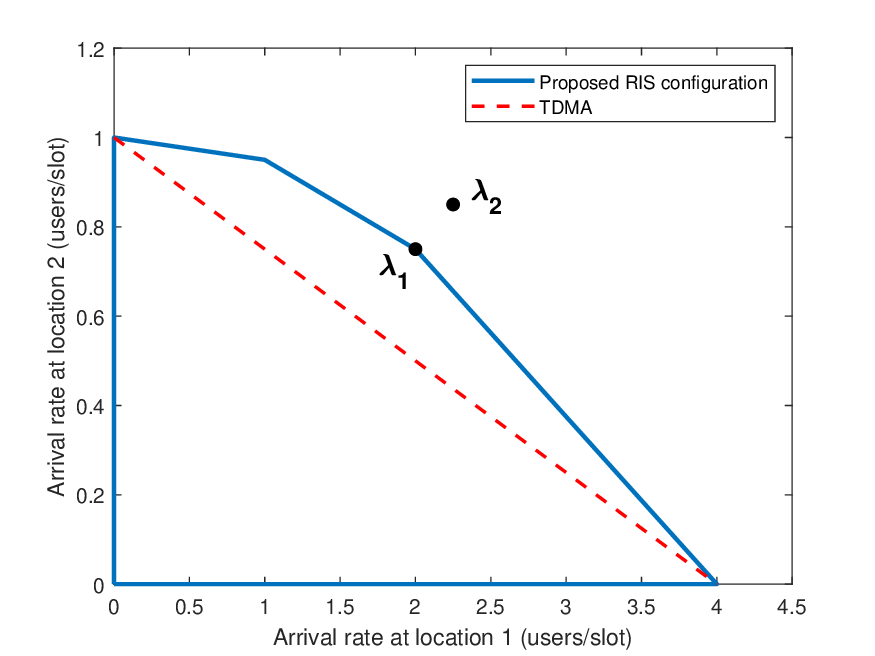}
    \caption{Simulation Results: Stability Region}
    \label{fig:sim1}
\end{figure}


\begin{figure}
    \centering
    \includegraphics[width=0.9\columnwidth]{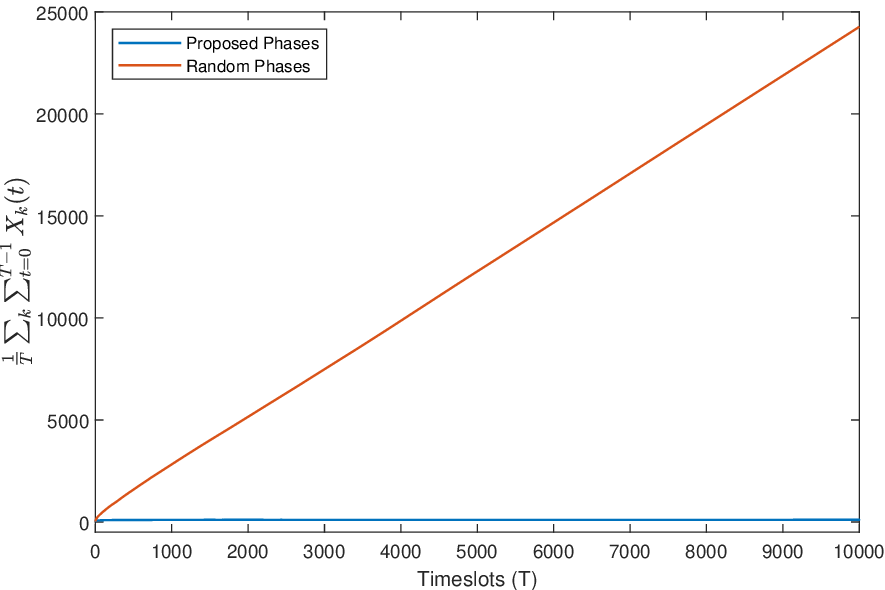}
    \caption{Simulation Results: Evolution of the stability metric for different phase shift designs. The arrival rate is fixed at 0.1 users/slot at all locations.}
    \label{fig:sim3}
\end{figure}

\section{NUMERICAL RESULTS} \label{sec:5}
In this section, we report some simulation results in order to illustrate the stability of the considered wireless system. We consider a $2 \times 2$ km$^2$ area with $N=128$ APs uniformly distributed in the lower left sub-region with coordinates $x,y \in [-0.75,-1]$ and the RIS with $M=1600$ elements is located at the origin. The bandwidth of the system is $20$ MHz and the carrier frequency is $1.9$ GHz. The large scale fading coefficients are generated according to the three-slope model in \cite[Section VI]{ngo2017cell}.

For visual considerations, we start by considering $K=2$ locations in the upper right region, with both coordinates being 0.25 and 0.75 for each location respectively. The users arrive to each location according to a Poisson process and transmit packets of $1$ Mbits with a power budget of $20$ dBm. In \autoref{fig:sim1} we compare the network stability region obtained by employing our proposed phase shifts design and the one obtained by using time division multiple access (TDMA) where at each timeslot, the RIS phases are optimized to maximize the $\SNR$ of the active user. We notice that the TDMA stability region is contained within the one guaranteed by our proposed RIS configuration. 

We now consider $K=400$ locations whose coordinates $x,y \in [0.05,1]$ form a square mesh in the upper right sub-region. In \autoref{fig:sim3}, we compare, for an arrival rate equal to $0.1$ users/slot at all locations, the sum over the locations of the moving average of the users density when using our proposed RIS configuration, to the one obtained by drawing each phase uniformly at random from $[0, 2\pi]$ at each timeslot. It is clear that the network is not stable when random RIS phases are used since the number of flows grows rapidly with time, while this number is bounded when using our proposed phase shifts. In \autoref{fig:sim4}, we plot the stability metric computed at $T=10000$ timeslots versus the arrival rate at all locations, for our proposed RIS phase shifts and the random phases scheme. We first observe that the metric varies exponentially with respect to the arrival rate. Secondly, we evince that the network is stable for a wider region of arrival rates when the RIS phases are optimized as we proposed, compared to the case where they are randomly selected, since the stability metric diverges at a prior point in the latter case.

\begin{figure}
    \centering
    \includegraphics[width=0.9\columnwidth]{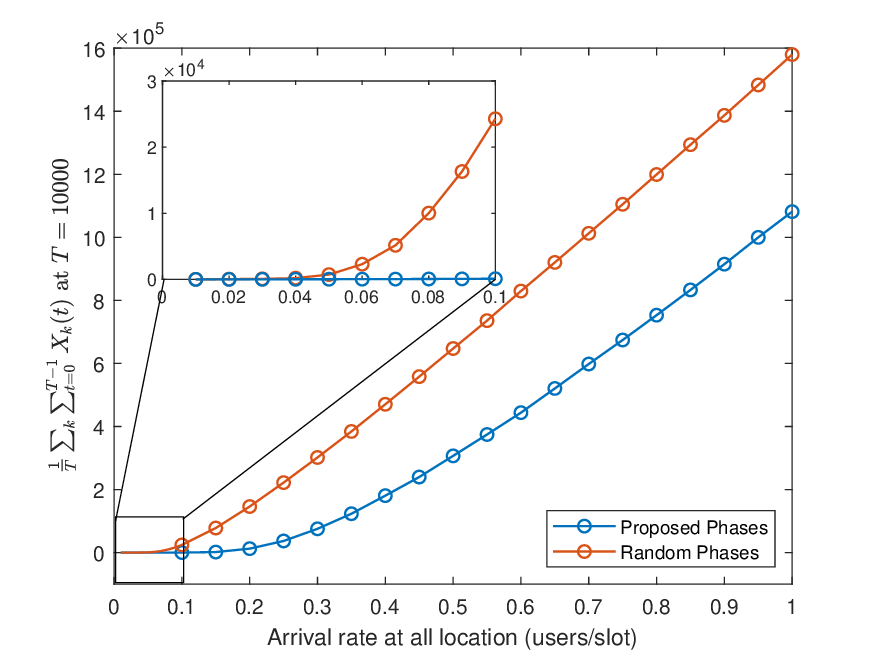}
    \caption{Simulation Results: Stability metric versus the arrival rate at all locations for different phase shift designs. The metric is calculated at T=10000 timeslots.}
    \label{fig:sim4}
\end{figure}

\section{CONCLUSION} \label{sec:6}
In this paper, we studied the network stability of a RIS-assisted cell-free MIMO system with dynamic arrivals and departures of users. We started by obtaining a closed form physical layer rate expression, and then described the data layer flow arrivals. Next, we proposed an optimization framework for the RIS induced phases and provided a low complexity sub-optimal solution of the optimization framework. We then proved rigorously that this solution achieves a high fraction of the stability region, and provided a lower bound of this fraction. 

\bibliographystyle{IEEEtran}
\bibliography{References.bib}

\end{document}